\documentclass[aps,pra,superscriptaddress,12pt,nofootinbib,longbibliography]{revtex4-2}


\makeatletter
\renewcommand{\p@subsection}{}
\renewcommand{\p@subsubsection}{}
\makeatother

\usepackage{graphicx} 
\usepackage{dcolumn}
\usepackage{bm}        
\usepackage{amsmath,amssymb,amsfonts,amsthm}
\makeatletter 
\makeatother  
\usepackage{mathtools}
\usepackage[normalem]{ulem} 

\newtheorem{theorem}{Theorem}[section]
\newtheorem*{theorem*}{Theorem}

\newtheorem{proposition*}{Proposition}

\newtheorem*{lemma*}{Lemma}
\newtheorem{corollary}[theorem]{Corollary}

\usepackage[mathscr]{eucal}
\usepackage{color}
\usepackage{xcolor}
\usepackage{tikz}
\usetikzlibrary{matrix,arrows,decorations.pathmorphing}
\usepackage[utf8]{inputenc}

\usepackage{xparse}
\usepackage{hyperref}
\usepackage{cleveref}

\usepackage{listings}
\usepackage{color} 
\definecolor{mygreen}{RGB}{28,172,0} 
\definecolor{mylilas}{RGB}{170,55,241}
\usepackage{xcolor}

\usepackage[normalem]{ulem} 
\usepackage{physics}
\providecommand{\ignore}[1]{}

\newif\ifcmnt
\cmnttrue
\ifdefined\cmntsoff\cmntfalse\fi

\ifcmnt
    \providecommand{\aucmnt}[1]{#1}

\else
    \providecommand{\aucmnt}[1]{}

\fi


\newcommand{\SI}[1]{\;\textrm{#1}}

\newcommand{\cD}{\mathcal{D}}

\newcommand{\LO}{\textrm{LO}}

\newcommand{\lang}{\langle}
\newcommand{\rang}{\rangle}

\begin{document} 
\bibliographystyle{unsrt}
\title{Broadband pulsed quadrature measurements with calorimeters}
\author{Ezad Shojaee}
\altaffiliation[Current address: ]{IonQ, College Park, MD 20740}
\affiliation{National Institute of Standards and Technology, Boulder, Colorado 80305, USA}
\affiliation{Department of Physics, University of Colorado, Boulder, Colorado, 80309, USA}
\author{James R. van Meter}
\affiliation{National Institute of Standards and Technology, Boulder, Colorado 80305, USA}
\author{Karl Mayer}
\affiliation{Quantinuum, Broomfield, CO 80021}
\author{Scott Glancy}
\affiliation{National Institute of Standards and Technology, Boulder, Colorado 80305, USA}
\author{Emanuel Knill}
\affiliation{National Institute of Standards and Technology, Boulder, Colorado 80305, USA}
\affiliation{Center for Theory of Quantum Matter, University of Colorado, Boulder, Colorado 80309, USA}
\date{\today}

\begin{abstract}
  A general one-dimensional quantum optical mode is described by a
  shape in the time or frequency domain. A fundamental problem is to
  measure a quadrature operator of such a mode. If the shape is narrow
  in frequency this can be done by pulsed homodyne detection, in which
  the mode and a matched local oscillator (LO) interfere on a
  beamsplitter, whose output ports are monitored by
  photo-detectors. The quadrature value is proportional to the
  difference between the photo-detectors' signals. When the shape of
  the mode is broad in frequency, the lack of uniform response of the
  detectors across the spectrum prevents direct application of this
  technique. We show that pulsed homodyne detection can be generalized
  to \emph{broadband pulsed (BBP) homodyne detection} setups with
  detectors such as calorimeters that detect total energy instead of
  total number of photons. This generalization has applications in
  quantum measurements of femtosecond pulses, and, speculatively,
  measurements of Rindler modes to verify the temperature of Unruh
  radiation. Like pulsed homodyne detection, BBP homodyne detection
  requires choosing the LO pulse such that the subtracted signal
  approaches the desired quadrature measurement for large LO
  amplitudes. A distinctive feature of the technique is that the LO
  pulse does not belong to the mode of the quadrature being measured.
  We analyze how the implemented measurement approaches an ideal
  quadrature measurement with growing LO amplitude. We prove that the
  moments of the measurement converge to the moments of the quadrature
  and that the measurement distributions converge weakly.
\end{abstract}

\maketitle

\section{Introduction}
Homodyne detection is an indispensable part of many quantum optics
experiments. In homodyne detection one measures quadrature operators
of an optical signal of interest. It is widely used in both the
frequency and the time domains; for classic examples
see~\cite{Breitenbach97, Faridani1993}. In homodyne detection one
interferes the input signal with a high-amplitude, mode-matched local
oscillator (LO) on a balanced beamsplitter and measures the light
exiting the two output ports of the beamsplitter with high-efficiency
detectors such as photodiodes (Fig.~\ref{fig:fig1}).  The quadrature
measurement outcome is obtained by scaling the difference between the
detector outputs. The measurement outcome approaches that of an ideal
quadrature measurement in the limit of large LO amplitude.  Homodyne
detection can be performed with an always-on LO (continuous wave
homodyne) or with a pulsed LO (pulsed homodyne).  In continuous wave
homodyne, detectors continuously record intensity. To determine a
specific quadrature measurement outcome, the intensity difference is
integrated against the shape in time defining the quadrature, as long
as this shape's spectrum is within the bandwidth of the detector. Many
quadratures can be simultaneously measured in this scheme.  In pulsed
homodyne, the LO is pulsed with a shape matching that of the
quadrature of interest. Only the quadrature matching the LO is
measured, but the detectors need not be time-resolving and the LO
power can be concentrated where it is needed.  It has the benefits of
simplicity and is suitable for the case where the quadrature of
interest is known ahead of time. Pulsed homodyne is required when
using slow detectors such as calorimeters.

Standard pulsed homodyne detection requires the detector outputs to be
proportional to the total number of photons arriving during the
pulse's time interval. This restricts the bandwidth of the LO pulse
and of the optical signal during the pulse's time interval to a
spectral band over which the detectors have nearly uniform
efficiency. To approach a single-shot quadrature measurement, the
efficiency needs to be close to unity over the spectral band.  There
are no detectors with a wide enough spectral band to measure
quadratures of optical signals with octave-spanning bandwidth such as
those associated with femtosecond pulses. Such octave-spanning
quadratures are also of interest in studies of the Unruh
effect~\cite{RevModPhys}, which arises for accelerating observers due
to the non-zero temperature of Rindler modes.  Rindler modes are
naturally extremely broadband in an inertial frame.

One type of detector with the potential for high-efficiency detection
over a broad frequency band is a calorimeter. Rather than counting
photons, calorimeters record total energy. Calorimeters are widely
used and can achieve near-unit efficiency for measuring energy across
the spectrum. For example transition edge sensors (TES) directly
detect the change of temperature due to incident light on a
superconducting island at the transition
temperature~\cite{irwin1995application,irwin1995quasiparticle} and can
be designed to have high efficiency over a wide bandwidth. In 1998, Ref.~\cite{cabrera1998detection} demonstrated broadband detection
over a \(3.5\) octave bandwidth from
\(0.3\SI{eV}\) to \(3.5\SI{eV}\) with an energy resolution of about
\(0.15\SI{eV}\). Some of these parameters have since been improved. For example,
Ref.~\cite{hattori2022optical} achieved \(60\SI{\%}\) efficiency with an energy resolution of \(0.067\SI{eV}\) in a band around \(0.8\SI{eV}\).

To enable the use of detectors such as calorimeters for measuring
specific quadratures of broadband modes, we introduce \emph{broadband
  pulsed (BBP) homodyne detection}. BBP homodyne is based on detectors
that record observables, such as total energy, that are expressed as
linear combinations of photon numbers in the modes of an orthogonal
mode basis. As in standard pulsed homodyne, the BBP homodyne
measurement outcome is the scaled difference between the detector
outputs.  We show that arbitrary quadratures expressible in this mode
basis can be measured with a high-amplitude LO whose pulse shape is
determined by the quadrature of interest. A feature of BBP homodyne is
that the mode of the LO is typically different from the mode whose
quadrature one wishes to measure.  BBP homodyne is distinguished from methods
that allow for the parallel measurement of narrow-band quadratures
across a broad spectral band, such as those described in
Ref.~\cite{shaked2018lifting}. Such parallel quadrature measurements
can in principle be combined in software to obtain measurement
outcomes for specific broadband modes of interest, at the cost of the
additional resources and high LO energy required for the simultaneous
measurements of many narrow-band quadratures.  

Theoretical treatments of pulsed homodyne detection argue that the
homodyne measurement outcomes correspond to an observable that differs
from the quadrature to be measured by an operator proportional to the
inverse of the LO amplitude.  For well-behaved states, such as those
with bounded quadrature moments, this implies that the moments of the
homodyne measurement outcome converge to those of the quadrature.  We
show that for BBP homodyne, the moments of the homodyne measurement
outcomes also converge to those of the quadrature.  For homodyne
detection in general, the difference between a moment of the homodyne
measurement outcomes and that of the quadrature depends on the
properties of the state and the order of the moment, assuming the LO
amplitude is finite.  This difference can be arbitrarily large at any
finite LO amplitude. Therefore, the convergence of the measurement
outcome distribution is not addressed by considerations of the moments
alone. An early study of the finite-LO amplitude behavior and the
convergence of the outcome distribution is
Ref.~\cite{braunstein1990homodyne}, where these outcome distributions
are derived and compared to quadrature outcome distributions for
superpositions of two coherent states. Several more detailed studies
investigating the finite-LO amplitude measurement operators in the
context of operational homodyne detection
followed~\cite{Banaszek,Sanders}.  Beyond convergence of moments, it
desirable to have weak convergence of measurement distributions, which
can be defined as convergence of expectation values of continuous
bounded functions of the measurement outcomes.  A general theory
relating the convergence of moments to weak convergence was developed
and applied by J. Kiukas and P. Lahti~\cite{kiukas2008moment_jmo,
  kiukas2008moment}.  Here, we apply this theory to show that weak
convergence holds also for BBP homodyne.

To present our results, we begin with a review of standard homodyne
detection in Sect.~\ref{sec:bbh}. We introduce BBP homodyne detection
in Sect.~\ref{calorimeter}. We then investigate the convergence of BBP
homodyne measurements to ideal quadrature measurements. Convergence of
moments and weak convergence of the measurement distribution is
established in Sect.~\ref{sec:wconv}. We conclude with a discussion
of BBP and our results in Sect.~\ref{sec:discussion}.

\section{Homodyne detection}
\label{sec:bbh}
Standard pulsed homodyne detection is a way to measure a quadrature of
a mode of a field. Typical applications involve electromagnetic fields
whose excitations are photons, and we use terminology accordingly.
Such a field may be described by annihilation and creation operator
fields in momentum space, see~\cite{leonhardt:qc1997a} for a
pedagogical treatment. We denote generic modes by $a,b,\ldots$ and the
corresponding mode (annihilation) operators by
$\hat{a},\hat{b},\ldots$ with or without indices. Mode operators
satisfy $\left[\hat{a},\hat{a}^{\dagger}\right]=1$. In the physical
field's state space, $\hat a$ annihilates the vacuum, and
$\hat a^{\dagger}$ applied to the vacuum creates a photon in mode $a$
with a particular wavefunction in momentum space.  The number operator
for generic mode $a$ is $\hat{n}_{a}=\hat{a}^{\dagger}\hat{a}$ and
counts the number of photons in this mode. We denote the vacuum state
of a mode by $\ket{0}$. The state space of a mode is the Hilbert space
spanned by the number states
$\ket{n}=\left(\hat{a}^{\dagger}\right)^{n}\ket{0}/\sqrt{n!}$.

We use the convention that generalized quadratures of mode $a$ are
defined by operators
$\hat q_{a,\alpha} = -i (\alpha \hat{a}^{\dagger}- \alpha^{*}\hat{a})$
where $\alpha$ is a complex number. Conventional, canonically
conjugate $\hat{x}$ (``position'') and $\hat{p}$ (``momentum'')
operators associated with the mode when viewed as a quantum harmonic
oscillator are $\hat{x}= \hat{q}_{a,i/\sqrt{2}}$ and
$\hat p = \hat q_{a,-1/\sqrt{2}}$. A coherent state is a normalized
state $\ket{\beta}$ that satisfies
$\hat{a}\ket{\beta} = \beta\ket{\beta}$.  This identity determines
\(\ket{\beta}\) up to a global phase.  Having fixed the normalized
vacuum state $\ket{0}$, we fix the phase by requiring that
$\bra{\beta}\ket{0}$ is positive real. This is equivalent to requiring
that \(\ket{\beta}\) has real, positive coefficient on \(\ket{0}\)
when expressed in the number basis. This coefficient is necessarily
\(e^{-|\beta|^{2}/2}\).  The overlap formula for coherent states is
then given by
\begin{align}
\braket{\beta}{\alpha} &= e^{-(\vert \beta \vert^{2} + \vert
\alpha \vert^{2} - 2 \alpha \beta^{*})/2}.
\end{align}
  
Displacement operators for mode $a$ are given by
$\hat{D}_{\beta}=e^{i\hat{q}_{a,\beta}}$, and we say that \(\hat{q}_{a,\beta}\) generates
  the displacement \(\hat{D}_{\beta}\).
They satisfy
$\hat{D}_{-\beta}=\hat{D}_{\beta}^{\dagger}$ and transform the mode operator according to 
\begin{align}
\hat{D}_{-\beta} \hat{a} \hat{D}_{\beta} &= \hat{a} + \beta.
\end{align}
They displace coherent states as follows:
\begin{align}
\hat{D}_{\beta}\ket{\alpha}&=e^{i\Im{\alpha^{*}\beta}}\ket{\alpha+\beta}.
\end{align}
The phase on the right-hand side can be calculated from
$\hat{D}_{-\alpha-\beta} \hat{D}_{\beta}\hat{D}_{\alpha}=e^{i\Im{\alpha^{*}\beta}}$ and is given by
twice the signed area of the triangular path followed in going from $0$ to $\alpha$ to $\alpha+\beta$ and back to $0$. For example, see Eq. (2) of Ref.~\cite{leibfried2003experimental}. 

Two modes $a$ and $b$ are orthogonal if
$[\hat{a},\hat{b}^{\dagger}]=0$. This implies that the two associated
photon wavefunctions are orthogonal and that polynomials of $\hat{a}$
and $\hat{a}^{\dagger}$ commute with polynomials of $\hat{b}$ and
$\hat{b}^{\dagger}$. The joint state space of the two modes can be
represented as the tensor product of the state spaces of each mode.
For our analysis, we consider finitely many modes at a time. A family
of orthogonal modes $\{a_{k}\}_{k=1}^{N}$ is associated with a vector
of mode operators $\hat{\bm{a}}=(\hat{a}_{1},\ldots,\hat{a}_{N})$
satisfying $[\hat{a}_{k},\hat{a}_{l}^{\dagger}]=\delta_{kl}$. The
joint vacuum of the modes is denoted by $\ket{\bm{0}}$, and its
density matrix is $\bm{0}$. The joint state space of the modes is
spanned by Fock states $\ket{\bm{n}}$, where
$\bm{n}=(n_{1},\ldots,n_{N})$ is the vector of mode occupation
numbers. That is, $n_{k}$ is the number of photons in mode $k$.

General mode operators in the system defined by the modes $\bm{a}$ can be associated with complex, normalized amplitude vectors $\bm{\beta}=(\beta_{1},\ldots,\beta_{N})$, $\sum_{k=1}^{N}|\beta_{k}|^{2}=1$. The operator $\hat{a}_{\bm{\beta}}=\sum_{k=1}^{N}\beta_{k}\hat{a}_{k}$ satisfies the defining properties of mode annihilation operators, namely
$\hat{a}_{\bm{\beta}}\ket{\bm{0}}=0$ and $\left[\hat{a}_{\bm{\beta}},\hat{a}_{\bm{\beta}}^{\dagger}\right] =
1$. As a result, generalized quadratures, number operators and states of mode $a_{\bm{\beta}}$ are well-defined.

Generalized quadratures of the modes $\bm{a}$ are defined
as
\begin{align}
\hat{q}_{\bm{a},\bm{\alpha}} &= \sum_{k=1}^{N}-i (\alpha_{k} \hat{a}_{k}^{\dagger} - \alpha_{k}^{*}\hat{a}_{k}).
\end{align}
The quadrature is normalized if \(\sum_{k}|\alpha_{k}|^{2}=1/2\).
The corresponding displacement operators are
\begin{align}
\hat{D}_{\bm{a}, \bm{\alpha}}&=e^{i\hat{q}_{\bm{a},\bm{\alpha}}}\nonumber\\
        &= \prod_{k}e^{i\hat{q}_{a_{k},\alpha_{k}}}.
\end{align}
The coherent states of modes $\bm{a}$ are 
\begin{align}
\ket{\bm{\alpha}} &= \otimes_{k=1}^{N}\ket{\alpha_{k}}_{a_{k}}\nonumber\\
&= \hat{D}_{\bm{a},\bm{\alpha}} \ket{\bm{0}}, 
\end{align}
where we use labels on the kets to specify the subsystem in a tensor
product that the state belongs to. The algebraic properties of coherent states and displacement operators generalize accordingly.

\begin{figure}
  \includegraphics[scale=0.47]{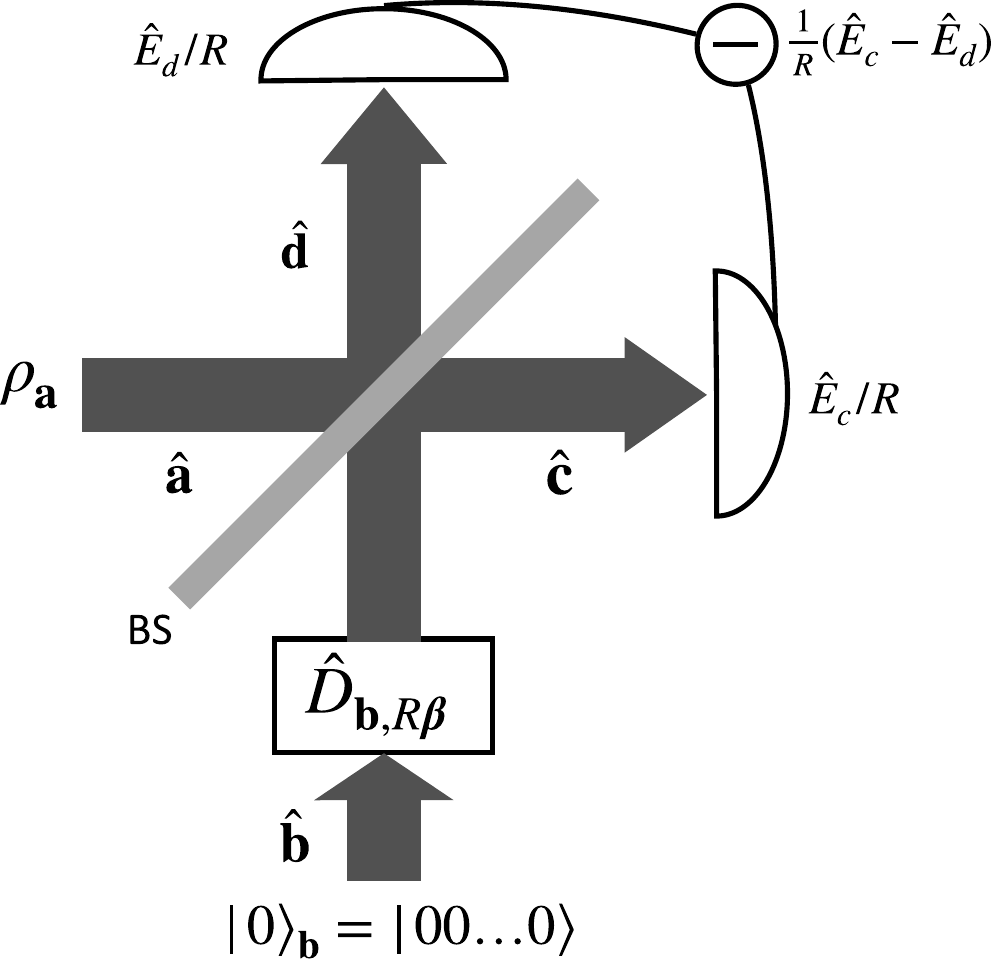}
  \centering
  \caption{Multi-mode homodyne measurement configuration.  The
    signal modes enter from the left on modes  \({\bm{a}}\) with mode
    operators \(\hat{\bm{a}}\).  The LO
    modes enter from the bottom on modes \({\bm{b}}\). To simplify
    calculations, the LO modes are initially in vacuum, and the LO
    coherent state is explicitly prepared by the displacement operator
    \(\hat{D}_{{\bm{b}},R\beta}\). Modes \(\bm{{a}}\) and \({\bm{b}}\) are
    combined on a balanced beam splitter (BS). The beam splitter's
    outgoing modes are \({\bm{c}}\) and \({\bm{d}}\). They are measured
    with photon counters (standard pulsed homodyne) or calorimeters
    (BBP pulsed homodyne). In both cases the observable associated
    with the detector is of the form
    \(\hat{E}=\sum_{k}\omega_{k}\hat{f}_{k}^{\dagger} \hat{f}_{k}\)
    with \(\bm{f}=\bm{c}\) or \(\bm{d}\). For standard pulsed
    homodyne, \(\omega_{k}=1\) for all \(k\). For BBP homodyne,
    \(\omega_{k}\) is the energy of photons in the \(k\)'th mode. The
    homodyne measurement result is determined by subtracting one
    detector's output from the other and rescaling the result by a
    factor of \(1/R\). In general, \(\omega_{k}\) can be arbitrary
    positive weights.}
  \label{fig:fig1}
\end{figure}

The configuration for pulsed homodyne is shown in Fig.~\ref{fig:fig1},
where for standard pulsed homodyne, the dectors count total number of
photons. The full state space of the signal is that of $N$ modes
$\bm{a}$ with mode operators $\hat{\bm{a}}$. The goal is to measure
the generalized quadrature $\hat{q}_{\bm{a},\bm{\beta}}$. We refer to
the quadrature to be measured as the target quadrature. For this
purpose an LO is introduced on a family of modes $\bm{b}$, each of
which is matched to the corresponding signal mode. To approximately
measure a generalized quadrature of the modes, the LO state is the
coherent state $\ket{R\bm{\beta}}_{\bm{b}}$, where $R$ is a large real
number. It is convenient to explicitly introduce the displacement
operator $\hat{D}_{\bm{b},R\bm{\beta}}$ that makes this state from
vacuum, as shown in the figure. The initial state consists of the
signal state $\rho_{\bm{a}}$ on the signal modes $\bm{a}$, and vacuum
$\bm{0}_{\bm{b}}$ on the LO modes $\bm{b}$.  The signal modes and the
matching LO modes are combined on a balanced beamsplitter. The
outgoing modes are labeled $\bm{c}$ and $\bm{d}$. With Heisenberg
evolution, we can express the outgoing mode operators in terms of the
original signal and the pre-displacement LO modes as follows:
\begin{align}
\label{eq:ckdk}
\hat{\bm{c}} &= \frac{1}{\sqrt{2}}\left(\hat{\bm{a}} + \hat{\bm{b}} + R\bm{\beta}\right),\nonumber\\
\hat{\bm{d}} &= \frac{1}{\sqrt{2}}\left(\hat{\bm{a}}-\hat{\bm{b}} - R\bm{\beta}\right).
\end{align}
Here we made a particular sign and phase choice for the balanced beamsplitter. The homodyne configuration is completed by photon counting on each of the two outgoing arms of the beamsplitter, subtracting the counts obtained, and dividing by $R$. To describe this measurement, we subtract the scaled total photon number operators for the arms to obtain the measurement operator
\begin{align}
\hat h &= \frac{1}{R}\left(\hat n_{\bm{c}}- \hat n_{\bm{d}}\right) \nonumber\\
&=\frac{1}{R}\left(\hat{\bm{c}}^{\dagger}\cdot\hat{\bm{c}} -
\hat{\bm{d}}^{\dagger}\cdot\hat{\bm{d}}\right)\nonumber\\
&= \bm{\beta}\cdot \hat{\bm{a}}^{\dagger} + \bm{\beta}^{*}\cdot\hat{\bm{a}} + \frac{1}{R}\left(
\hat{\bm{a}}^{\dagger}\cdot \hat{\bm{b}} +
                                                            \hat{\bm{b}}^{\dagger}\cdot \hat{\bm{a}}\right).
\end{align}
Here, the inner product between two vectors is denoted by ``$\cdot$'' and complex conjugation and dagger is applied element-wise. Intuitively, for large $R$ the summand with a factor of $\frac{1}{R}$ can be neglected, so $\hat h$ approaches a measurement of the target quadrature $\hat{q}_{\bm{a},i\bm{\beta}}$. 

Standard pulsed homodyne requires two ideal photon counters. In
practice, noisy, high-efficiency photon counters are used. If the
noise increases at a sublinear rate with photon number, a good
quadrature measurement can still be obtained by increasing the LO
amplitude $R$. High-efficiency photo-diodes have this property even
though they are unable to resolve photon numbers. Another effect that
needs to be considered is that the deviation of $\hat h$ from the
target quadrature measurement is affected by the state of photons in
modes orthogonal to that of the target quadrature. The amplitude $R$
needs to be increased to make the signal from such photons
negligible. Finally, the efficiency of the photon counters needs to be
uniformly high for all the relevant modes. Most photon counters have
high efficiency in a relatively narrow frequency band, which limits
the range of energies associated with the modes $a_{k}$. Calorimeters
have the potential to overcome this limitation but measure the total
energy $\sum_{k=1}^{N}\omega_{k}\hat n_{a_{k}}$ rather than the total
photon number.

\section{Quadrature measurement with calorimeters} \label{calorimeter}
An ideal calorimeter measures the total energy of the light. If we consider $N$ modes $\bm{a}$, then a calorimeter measures the energy operator 
\begin{equation}
\hat{E} = \sum_{k=1}^{N} \omega_{k} \hat{a}_{k}^{\dagger} \hat{a}_{k}.
\end{equation} 
where $\omega_{k}$ is the energy of a photon in mode $k$. One can
think of a calorimeter as a device performing measurement of a
weighted sum of photon numbers in a family of modes, where the weights
are the energies of the modes. Our treatment does not require
$\omega_{k}$ to be the energy of a mode $k$. It can be an arbitrary
real, positive weight. If $\omega_k=1$ for all $k$, $\hat{E}$ measures
the total photon number.  We therefore refer to the vector
$\bm{\omega}$ as mode weights.

Consider the homodyne setup with calorimeters instead of photon counters at the outgoing arms of the balanced beamsplitter.  The 
the measurement operator corresponding to subtracting the calorimeter measurement results and dividing by $R$ is 
\begin{align}
\frac{1}{R}\Delta \hat{E} & = \frac{1}{R}\sum_{k=1}^{N} \omega_{k} \big( \hat{c}_{k}^{\dagger} \hat{c}_{k} - \hat{d}^{\dagger}_{k} \hat{d}_{k} \big) \nonumber \\
   & = \sum_{k=1}^{N} \omega_{k} ( \hat{a}_{k}^{\dagger} \beta_{k} + \beta_{k}^{*} \hat{a}_{k} ) + \frac{1}{R}\sum_{k=1}^N \omega_{k} ( \hat{a}_{k}^{\dagger} \hat{b}_{k} + \hat{b}_{k}^{\dagger} \hat{a}_{k} ). \label{eq:calop}
\end{align}
Here all operators represent incoming mode operators according to
Heisenberg evolution. In particular \(\Delta\hat{E}\) when so
expressed depends on the LO displacement and therefore on \(R\).
Assuming that the contribution from terms multiplied by $\frac{1}{R}$
is negligible, this approaches a measurement of the target generalized
quadrature $\hat{q}_{\bm{a}, i(\bm{\omega} * \bm{\beta})}$, where
``*'' denotes the element-wise product defined by
$(\bm{\gamma}*\bm{\gamma'})_{k} = \gamma_{k}\gamma'_{k}$.  From now
on, we omit the mode label on operators such as $\hat{q}$ when the
modes are clear from context. For example, we write
\(\hat{q}_{\bm{\alpha}}\) for \(\hat{q}_{\bm{a},\bm{\alpha}}\). If we
wish to measure the generalized coordinate $\hat{q}_{\bm{\alpha}}$, we
choose $\bm{\beta} = -i \bm{\alpha}*(1/\bm{\omega})$, where
$1/\bm{\omega}$ is the vector with $k$'th entry $1/\omega_{k}$.  The
required LO amplitude is
\(R\bm{\beta}=-i \bm{\alpha}*(R/\bm{\omega})\).

In our treatment, quadrature expectations and coherent state
amplitudes are unitless and scaled so that the vacuum expectation of
the square of a normalized quadrature is \(1/2\). The units of the
weights \(\bm{\omega}\) and of the scale \(R\) are therefore
identical. In the case of calorimeters, they both have the same energy
units. For the purposes of BBP homodyne, quantities of interest
depend on \(R\) and \(\bm{\omega_{k}}\) only through ratios such
  as \(R/\omega_{k}\).

For analyzing the behavior of BBP homodyne, we define $\delta=1/R$ and introduce the BBP measurement operator
\begin{align} \label{observable}
\hat{q}_{\bm{\alpha},\delta}
&= \delta \Delta \hat{E} \nonumber\\
&= \hat{q}_{\bm{a}, \bm{\alpha}}
+ \delta
\qty(
\qty(\bm{\omega}*\hat{\bm{a}}^{\dagger}\vphantom{\hat{\bm{b}}^{\dagger}})\cdot\hat{\bm{b}}
+ \qty(\bm{\omega}*\hat{\bm{b}}^{\dagger})\cdot\hat{\bm{a}}
).
\end{align}
As noted previously \(\Delta\hat{E}\) is defined as an operator on the
incoming modes and therefore depends on \(\delta\) through the
displacement on the LO modes.  The associated LO amplitudes are
\begin{align}
\bm{\beta}_{\LO,\bm{\alpha},\delta} &=
-i\bm{\alpha}*\frac{1}{\bm{\omega}\delta}.\label{eq:lodisp}
\end{align}
We define $\hat{q}_{\bm{\alpha},0}= \hat{q}_{\bm{a},\bm{\alpha}}$.
When \(\bm{\alpha}\) is clear from context, we abbreviate
\(\hat{q}_{\delta}=\hat{q}_{\bm{\alpha},\delta}\) and
\(\hat{q}=\hat{q}_{\bm{\alpha},0}\).

The BPP measurement operator $\hat{q}_{\delta}$ is by design
proportional to a difference of two commuting total energy operators
at the outgoing modes, so the spectrum of $\hat{q}_{\delta}$ is
discrete. The outgoing mode Fock states that diagonalize the energy
operators correspond to displaced Fock states in the incoming modes,
where the displacement diverges to infinity as \(\delta\) goes to
zero. In contrast, \(\hat{q}\) has continuous spectrum and improper
eigenspaces associated with its spectral measure. This immediately
suggests that the convergence of \(q_{\delta}\) to the target
quadrature \(q_{0}\) is not straightforward. The BBP measurement
operator is unbounded, as is the quadrature measurement. As a result,
there are states in the Hilbert space not in the domains of these
operators, and for such states convergence is impossible. Thus proofs
and quantification of convergence are necessarily state-dependent.

The BBP homodyne configuration, as described above, has perfect
calorimeters. However, as with the standard homodyne configuration,
noisy but high-efficiency calorimeters can be used, provided that the
noise scales sublinearly with total energy. We do not include such
noise in our analysis.

\section{Weak convergence to a quadrature measurement}
\label{sec:wconv}
Our first convergence result establishes that the moments for BBP
homodyne measurement outcome distributions converge to the moments of
the target quadrature. This result is well known for standard
homodyne, for example, see Ref.~\cite{Sanders}, Eq. (3.2) and
following, or Ref.~\cite{kiukas2008moment_jmo} Prop. 8. The proofs for
standard homodyne generalize with little modification to BBP homodyne.
We use the angle-bracket notation $\lang \cdot \rang_{\rho}$ for
expectations of operators with respect to state $\rho$. We omit the
subscript when the state with respect to which expectations are
computed is clear from context.

\begin{theorem}\label{thm:momentlim}
  Suppose the state $\rho_{\bm{a}}$ on the signal modes $\bm{a}$ has
  well-defined expectations for all polynomials in the mode operators
  and their adjoints up to degree $n$, and the joint initial state is
  $\rho_{\bm{a}}\otimes\bm{0}_{\bm{b}}$.  Then the moments of
  $\hat{q}_{\delta}$ and $\hat{q}$ satisfy
\begin{align}
\lang \hat{q}_{\delta}^{n}\rang  &= \lang
\hat{q}^{n} \rang + O(\delta^{2}), \label{thm1:eq}
\end{align} where the constant in the order notation depends on
  $\rho_{\bm{a}}$.
\end{theorem}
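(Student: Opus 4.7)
The plan is to expand $\hat{q}_\delta^n$ as a polynomial in $\delta$ and show that the odd-order coefficients vanish in expectation against the initial state $\rho_{\bm{a}}\otimes\bm{0}_{\bm{b}}$. Writing $\hat{q}_\delta = \hat{q} + \delta\hat{R}$, where from Eq.~\eqref{observable},
\begin{align*}
\hat{R} = (\bm{\omega}*\hat{\bm{a}}^{\dagger})\cdot\hat{\bm{b}} + (\bm{\omega}*\hat{\bm{b}}^{\dagger})\cdot\hat{\bm{a}},
\end{align*}
the $n$-th power expands as $\hat{q}_\delta^n = \sum_{m=0}^n \delta^m T_m$, where $T_m$ is the sum over all $\binom{n}{m}$ orderings of a product having $m$ factors of $\hat{R}$ and $n-m$ factors of $\hat{q}$, with non-commuting factors kept in their chosen positions. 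The $m=0$ term is exactly $\hat{q}^n$, so it suffices to show that $\langle T_m\rangle = 0$ for odd $m$ and that each remaining $\langle T_m\rangle$ is finite.

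The central observation is a parity argument on the LO modes. Each factor of $\hat{R}$ contributes exactly one $\bm{b}$-operator, either some $\hat{b}_k$ or some $\hat{b}_k^{\dagger}$, while $\hat{q}$ contributes none. Because $\bm{b}$-operators commute with every $\bm{a}$-operator, I can slide all $\bm{b}$-operators in each summand of $T_m$ to one side while preserving their mutual order, factoring that summand into a product of $m$ $\bm{b}$-operators times a polynomial of total degree $n$ in $\hat{\bm{a}}$ and $\hat{\bm{a}}^{\dagger}$. Expectations against the tensor-product initial state factorize accordingly. By Wick's theorem on the bosonic vacuum, the expectation of any product of an odd number of $\bm{b}$ creation/annihilation operators vanishes (there is no perfect pairing), so $\langle T_m\rangle = 0$ for every odd $m$.

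Since $m\le n$, only finitely many even-order terms survive, giving $\langle\hat{q}_\delta^n\rangle = \langle\hat{q}^n\rangle + \delta^2 \langle T_2\rangle + \delta^4 \langle T_4\rangle + \cdots$, which is the desired $O(\delta^2)$ form. Each $\langle T_{2k}\rangle$ is a finite linear combination of $\rho_{\bm{a}}$-expectations of $\bm{a}$-polynomials of total degree $n$, with coefficients assembled from the entries of $\bm{\omega}$ and the (finite) $\bm{b}$-vacuum Wick contractions, so the hypothesis that $\rho_{\bm{a}}$ has well-defined polynomial moments up to degree $n$ ensures each such expectation is finite; the state-dependent constant in the order notation is then $\sum_{k\ge 1}\delta^{2k-2}\langle T_{2k}\rangle$ evaluated at, say, $\delta$ in a bounded range. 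The main obstacle is bookkeeping rather than conceptual: the non-commutativity on the $\bm{a}$-sector must be respected when extracting the $\bm{b}$-factors, but because the $\bm{a}$- and $\bm{b}$-sectors commute across each other, the relative orders within each sector are preserved by the slide, and the vanishing of odd-order $\bm{b}$-vacuum expectations is unaffected by whatever reordering would be required on the $\bm{a}$ side.
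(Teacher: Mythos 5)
Your proof is correct and follows essentially the same route as the paper's: decompose $\hat{q}_{\delta}=\hat{q}+\delta\hat{C}$, expand the $n$-th power, group monomials by the power of $\delta$, and use the vacuum on the pre-displacement LO modes to kill the linear term while the hypothesis on $\rho_{\bm{a}}$ guarantees finiteness of the rest. Your parity observation that \emph{all} odd-order terms vanish (not just the first-order one) is a valid minor strengthening, but it is not needed for the stated $O(\delta^{2})$ bound and does not change the structure of the argument.
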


\begin{proof}
  Define the operator
  \(\hat{C}=(\hat{q}_{\delta}-\hat{q})/\delta=\qty(\bm{\omega}*\hat{\bm{a}}^{\dagger}\vphantom{\hat{\bm{b}}^{\dagger}})\cdot\hat{\bm{b}}
  + \qty(\bm{\omega}*\hat{\bm{b}}^{\dagger})\cdot\hat{\bm{a}}\), which
  does not depend on $\delta$. We expand
  $\hat{q}_{\delta}^{n}= ( \hat{q} + \delta\; \hat{C} )^{n}$ as a sum
  of $2^{n}$ monomials expressed as ordered products of $\hat{q}$ and
  $\delta \; \hat{C}$. We can order the monomials by the power of
  $\delta$ that multiplies them. Let $\hat{Q}_{k}$ be the sum of the
  monomials that are multiplied by $\delta^{k}$, so
  $\hat{q}_{\delta}^n=\sum_{k=0}^n\hat{Q}_k$. Then
  $\hat{Q}_{0}= \hat{q}^{n}$, so
  $\lang \hat{Q}_{0}\rang = \lang \hat{q}^{n}\rang$ contributes the
  first term on the right-hand side of Eq.~\eqref{thm1:eq}. We can
  express $\hat{Q}_{1}$ as
  \begin{align}
    \hat{Q}_{1} &= \delta \sum_{k=0}^{n-1} \hat{q}^{k} \hat C \hat{q}^{n-k-1}.
  \end{align}
  The factors of $\hat{q}$ in the sum act only on the signal modes
  $\bm{a}$, and $\hat{C}$ is linear in the LO mode operators. Since
  $\rho$ is vacuum on the pre-displacement LO modes, the expectations
  of the summands of $\hat{Q}_{1}$ are zero. Consequently
  $\lang \hat{Q}_{1}\rang=0$. The remaining terms in the expansion are
  multiplied by a factor of order $\delta^{2}$ or smaller. Their
  expectations may be expressed as expectations of monomials of degree
  at most $n$ in the signal mode operators and their adjoints, and of
  degree at most $n$ in the LO mode operators and their
  adjoints. Taking account of the vacuum state in the LO mode, these
  expectations reduce to expectations of $\rho_{\bm{a}}$ of monomials
  of degree at most $n$ in the signal mode operators. The expectations
  of these monomials are finite by assumption and do not depend on
  $\delta$. Thus
  \begin{align}
    \lang \hat{q}_{\delta}^{n}\rang
    &=
      \sum_{k=0}^{n}\lang \hat{Q}_{k}\rang \nonumber\\
    &= \lang \hat{Q}_{0}\rang + \lang \hat{Q}_{1}\rang+ O(\delta^{2})\nonumber\\
    &= \lang\hat{q}^{n}\rang + O(\delta^{2}).
  \end{align}
\end{proof}

We explicitly evaluate the differences between the  moments of $\hat{q}_{\delta}$ and $\hat{q}$ for the first and second moment. The first moment does not depend on $\delta$ and is identical to the target quadrature's. Because the pre-displacement LO modes are in vacuum,
\begin{align}
\langle \hat{q}_{\delta} - \hat{q} \rangle
&= \delta
\sum_{k=1}^{N} \omega_{k} \langle \hat{a}^{\dagger}_{k} \hat{b}_{k} + \hat{a}_{k} \hat{b}^{\dagger}_{k} \rangle\nonumber\\
&= 0.
\end{align}
For the second moment we have
\begin{align}
\langle \hat{q}_{\delta}^{2} - \hat{q}^{2} \rangle
& =
\delta^{2} \sum_{k,k'=1}^{N} \omega_{k} \omega_{k'} \left\langle (
\hat{a}^{\dagger}_{k} \hat{b}_{k} + \hat{a}_{k} \hat{b}^{\dagger}_{k}
) ( \hat{a}^{\dagger}_{k'} \hat{b}_{k'} + \hat{a}_{k'}
\hat{b}^{\dagger}_{k'} ) \right\rangle \nonumber\\
&\phantom{=\ }
- \delta \sum_{k=1}^{N} \omega_{k} \left\lang
(\hat{a}^{\dagger}_{k} \hat{b}_{k} + \hat{a}_{k} \hat{b}^{\dagger}_{k}
) \hat{q}\right\rang - \delta \sum_{k=1}^{N} \omega_{k} \left\lang \hat{q} ( \hat{a}^{\dagger}_{k}
\hat{b}_{k} + \hat{a}_{k} \hat{b}^{\dagger}_{k} )\right\rang\nonumber\\
&=
\delta^{2} \sum_{k} \omega^{2}_{k} \langle \hat{n}_{a_k} \rangle,
\end{align}
where all terms other than those with factors of the form
$\hat{b}_{k}\hat{b}_{k}^{\dagger}$ are zero because the
pre-displacement LO modes are in vacuum. If it is known that the
signal state is Gaussian, then it is sufficient to measure the first
and second moments of its quadratures.  The above shows that the
second moments of $\hat{q}_{\delta}$ are biased high by a term of
order $\delta^{2}$ with a coefficient that can be estimated knowing
only bounds on the expected photon numbers.

Theorem~\ref{thm:momentlim} applies specifically to states
with well-defined expectations for all polynomials of mode operators.  A
dense linear space of pure states with such well-defined expectations
is the set \(\cD_{S}\) of Schwartz states, defined as states whose
Wigner functions decay
superpolynomially~\cite{hernandez2022rapidly}. This set of states is
preserved by all polynomials of the mode operators and their adjoints.
It includes number states and coherent states and their finite linear
combinations.
 
Next we show that the moment convergence for a restricted family of
signal states in Thm.~\ref{thm:momentlim} implies that for every
bounded continuous function \(f\) of the reals and every signal state,
the expectations of \(f(\hat{q}_\delta)\) with respect to the
BBP homodyne outcome distributions converge to the expectation of
\(f(\hat{q})\).  This property is equivalent to weak convergence in
the sense of probabilities of the positive operator valued measures
(POVMs) realized by BBP homodyne to the spectral measure of
\(\hat{q}\) (see~\cite{kiukas2008moment_jmo} Prop. 3).
  
\begin{theorem}\label{thm:weaklim}
  For every continuous complex-valued bounded function $f$ on the
  reals and for every  state
  $\rho=\rho_{\bm{a}}\otimes \bm{0}_{\bm{b}}$, we have
  \begin{align}
    \lim_{\delta\rightarrow 0}\lang f(\hat{q}_{\delta})\rang
    &= \lang f(\hat{q}) \rang.
      \label{eq:weaklim}
  \end{align}
\end{theorem}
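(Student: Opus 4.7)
The plan is to leverage Theorem~\ref{thm:momentlim} together with the general moment-to-weak convergence machinery for POVMs of Kiukas and Lahti, and then extend from a dense class of well-behaved signal states to arbitrary signal states by a trace-norm approximation argument. The structure is: first prove the statement for signal states $\rho_{\bm{a}}$ on the dense subspace $\cD_{S}$ of Schwartz states, then promote to general $\rho_{\bm{a}}$ using the uniform operator-norm bound $\|f(\hat{q}_{\delta})\|\le\|f\|_{\infty}$.

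For the first step, I fix $\rho_{\bm{a}}\in\cD_{S}$, so that $\rho=\rho_{\bm{a}}\otimes\bm{0}_{\bm{b}}$ has finite expectations of every polynomial in mode operators. Theorem~\ref{thm:momentlim} then gives $\lang\hat{q}_{\delta}^{n}\rang\to\lang\hat{q}^{n}\rang$ for every nonnegative integer $n$. To convert this into weak convergence, I invoke the Kiukas--Lahti theorem (e.g., Prop.~8 of \cite{kiukas2008moment_jmo}), which states that if the $n$th moments of a sequence of probability measures converge to the $n$th moments of a limiting probability measure, and the limiting measure is uniquely determined by its moments, then the sequence converges weakly. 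The limiting measure here is the spectral distribution of the quadrature $\hat{q}$ in the Schwartz state $\rho_{\bm{a}}$; its moments grow no faster than those of a Gaussian (this follows from the superpolynomial decay of the Wigner function, so that all quadrature moments exist and satisfy Carleman's condition), so the measure is determinate. Combining with Prop.~3 of \cite{kiukas2008moment_jmo}, which equates weak convergence of the induced probability measures with convergence of $\lang f(\hat{q}_{\delta})\rang\to\lang f(\hat{q})\rang$ for every bounded continuous $f$, establishes Eq.~\eqref{eq:weaklim} whenever $\rho_{\bm{a}}\in\cD_{S}$.

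For the second step, I extend from Schwartz signal states to arbitrary signal states $\rho_{\bm{a}}$ by density. For any bounded continuous $f$, functional calculus yields bounded operators $f(\hat{q}_{\delta})$ and $f(\hat{q})$ of operator norm at most $\|f\|_{\infty}$, uniformly in $\delta$. Given $\epsilon>0$, I pick a density operator $\sigma_{\bm{a}}$ supported on $\cD_{S}$ with $\|\rho_{\bm{a}}-\sigma_{\bm{a}}\|_{1}<\epsilon/(3\|f\|_{\infty})$; such approximations exist because pure Schwartz states are trace-norm dense in the pure states and convex combinations are dense in all density operators. Setting $\sigma=\sigma_{\bm{a}}\otimes\bm{0}_{\bm{b}}$ and applying the trace-norm / operator-norm Hölder bound $|\mathrm{tr}(T(\rho-\sigma))|\le\|T\|\,\|\rho-\sigma\|_{1}$ to $T=f(\hat{q}_{\delta})$ and $T=f(\hat{q})$, together with the Schwartz-state limit for $\sigma$, a standard $\epsilon/3$-argument closes the proof.

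The main obstacle is the first step, and within it the moment-determinacy check for the quadrature distribution on Schwartz states: this is what lets us upgrade convergence of moments to weak convergence, and without it the moment convergence in Theorem~\ref{thm:momentlim} would not suffice. The density argument in the second step is essentially routine, but it depends crucially on the uniform-in-$\delta$ bound $\|f(\hat{q}_{\delta})\|\le\|f\|_{\infty}$ from functional calculus, which is what allows the approximation error to be controlled independently of $\delta$.
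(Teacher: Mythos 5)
Your overall strategy---moment convergence on a dense set of nice signal states, moment-determinacy of the limiting quadrature distribution on that set, then extension to arbitrary states via the uniform bound $\|f(\hat{q}_{\delta})\|\le\|f\|_{\infty}$---is structurally the same as the paper's, which packages the dense-subspace argument and the extension step into Prop.~5 of Ref.~\cite{kiukas2008moment_jmo}. Your explicit $\epsilon/3$ trace-norm argument in the second step is a correct, self-contained version of what that proposition provides.

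However, there is a genuine gap in your first step: the moment-determinacy claim for Schwartz states is not justified by the argument you give, and the claim as you state it is false. Superpolynomial decay of the Wigner function (and hence of the quadrature marginal) does \emph{not} imply that the quadrature moments grow no faster than Gaussian moments, nor that Carleman's condition holds. For example, a Schwartz state whose number-basis coefficients decay like $e^{-(\log n)^{2}}$ has $\lang \hat{N}^{k}\rang$ finite for all $k$ but growing super-exponentially in $k$, so the even quadrature moments $m_{2n}$ can grow fast enough that $\sum_{n} m_{2n}^{-1/(2n)}<\infty$; Carleman then fails, and determinacy of the distribution is not established. (Recall the Stieltjes-type examples: densities decaying superpolynomially, such as perturbations of $e^{-|x|^{1/4}}$, need not be determined by their moments.) This is precisely why the paper does not use all of $\cD_{S}$ but instead restricts to $\cD_{\mathrm{coh}}$, the finite linear combinations of coherent states, for which the quadrature distributions have sub-Gaussian tails and determinacy is supplied by Lemma~2 of Ref.~\cite{kiukas2008moment_jmo}. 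The repair is straightforward---replace $\cD_{S}$ by $\cD_{\mathrm{coh}}$ (still dense, still covered by Theorem~\ref{thm:momentlim} since $\cD_{\mathrm{coh}}\subseteq\cD_{S}$) and apply your weak-convergence and density arguments to the pure components of the approximating state separately---but as written the determinacy step does not go through.
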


\begin{proof}
  To prove the theorem we implement a version of the sequence of steps
  given in Sect.~4 of Ref.~\cite{kiukas2008moment_jmo} for
  establishing weak convergence for measurement schemes. In this
  reference, the steps are implemented to prove weak convergence of
  standard homodyne with one mode.  The first step is to identify a
  dense subspace of states on which the target quadrature's
  measurement outcome distribution is determined by its moments
  (Ref.~\cite{kiukas2008moment_jmo} Def. 2). A probability
  distribution $\mu$ on the reals is determined by its moments if for
  every probability distribution $\nu$ whose moments are the same as
  those of $\mu$, we have $\mu=\nu$. The set
  $\cD_{\text{coh}}\subseteq \cD_{S}$ of finite linear combinations of
  coherent states suffices for this purpose. For one mode, this is a
  consequence of Ref.~\cite{kiukas2008moment_jmo} Lemma 2.  Because
  $\hat{q}$ is the quadrature of one mode, this Lem. 2 suffices for
  BBP homodyne. See also the discussion after Proposition 2 in the
  reference. The next step is to verify that for states in
  \(\cD_{\text{coh}}\), the positive-operator-valued measures (POVMs)
  associated with $\hat{q}_{\delta}$ have moments converging to those
  of $\hat{q}$. For BBP homodyne, since \(\cD_{\text{coh}}\) is a
  subset of the set of Schwartz states, this is a consequence of
  Thm.~\ref{thm:momentlim}. For the purpose of applying the results of
  Ref.~\cite{kiukas2008moment_jmo}, the POVMs associated with
  $\hat{q}_{\delta}$ are the POVMs on the signal modes obtained from
  the projection-valued measures of the operators \(\hat{q}_{\delta}\)
  by fixing the pre-displacement state of the LO modes to be vacuum.
  The POVM for $\hat{q}$ is projection-valued, but the POVMs for
  $\hat{q}_{\delta}$ are not, they are positive-operator valued and
  referred to as ``semispectral measures'' in
  Ref.~\cite{kiukas2008moment_jmo}. With this, the conditions of
  Ref.~\cite{kiukas2008moment_jmo} Prop. 5 are satisfied. That is,
  with the definitions of this reference, because of moment
  convergence, there is a POVM that is a moment limit of the POVMs
  associated with \(\hat{q}_{\delta}\). One such moment limit is the
  spectral measure of \(\hat{q}\).  Since the latter is determined by
  its moments, this moment limit is unique. The conclusion from the
  reference's Prop. 5 is that the POVMs associated with
  \(\hat{q}_{\delta}\) converge weakly in the sense of probabilities
  to the spectral measure of \(\hat{q}\). This is equivalent to the
  conclusion of our theorem.
\end{proof}

There are many equivalent definitions for weak convergence of the
measures associated with \(\hat{q}_{\delta}\) to the measures of
\(\hat{q}\).  The version given in Thm.~\ref{thm:weaklim} corresponds
to Ref.~\cite{kiukas2008moment_jmo} Prop. 3 (iv), which is also
equivalent to the convergence of overlaps as expressed in the
following corollary. 

\begin{corollary}
  Let \(\bm{g}\) be any family of modes or other
  quantum systems that are not involved in the BBP homodyne
  measurements, and \(f\) a continuous complex-valued bounded function
  on the reals.  For all joint pure states $\ket{\phi}$ and
  $\ket{\psi}$ of \(\bm{g}\), the signal, and the LO modes, if
  \(\ket{\phi}\) and \(\ket{\psi}\) are vacuum on the LO modes, then
  we have
  \begin{align}
    \lim_{\delta\rightarrow 0}\bra{\phi}f(\hat{q}_{\delta})\ket{\psi}
    &= \bra{\phi}f(\hat{q})\ket{\psi}.
  \end{align}
\end{corollary}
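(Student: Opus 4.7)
The plan is to reduce this corollary to Theorem~\ref{thm:weaklim} through two standard steps: the polarization identity, to pass from off-diagonal matrix elements to diagonal expectations, and a partial trace, to eliminate the ancillary system $\bm{g}$, on which $f(\hat{q}_\delta)$ and $f(\hat{q})$ act as the identity.

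First I would invoke the polarization identity
\begin{align}
4\bra{\phi}A\ket{\psi} &= \sum_{k=0}^{3} i^{-k}\bra{\phi+i^k\psi}A\ket{\phi+i^k\psi}
\end{align}
with $A=f(\hat{q}_\delta)$ and then with $A=f(\hat{q})$. Setting $\ket{\chi_k}=\ket{\phi}+i^k\ket{\psi}$, each $\ket{\chi_k}$ is still a product vector of the form $\ket{\eta_k}_{\bm{g}\bm{a}}\otimes\ket{\bm{0}}_{\bm{b}}$, because the vacuum condition on the LO modes is preserved by linear combinations. It therefore suffices to show $\bra{\chi_k}f(\hat{q}_\delta)\ket{\chi_k}\to\bra{\chi_k}f(\hat{q})\ket{\chi_k}$ as $\delta \to 0$ for each $k \in \{0,1,2,3\}$; the four limits then recombine with coefficients $i^{-k}/4$.

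Next I would define the reduced (generally mixed, possibly subnormalized) signal-mode operator $\rho_{\bm{a}}^{(k)} = \operatorname{tr}_{\bm{g}}\bigl[\ket{\eta_k}\bra{\eta_k}\bigr]$. Since $f(\hat{q}_\delta)$ acts as the identity on $\bm{g}$, tracing over $\bm{g}$ yields
\begin{align}
\bra{\chi_k}f(\hat{q}_\delta)\ket{\chi_k} &= \operatorname{tr}\bigl[\bigl(\rho_{\bm{a}}^{(k)}\otimes\bm{0}_{\bm{b}}\bigr)\,f(\hat{q}_\delta)\bigr],
\end{align}
and analogously for $f(\hat{q})$. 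Writing $\rho_{\bm{a}}^{(k)} = c_k \tilde{\rho}_{\bm{a}}^{(k)}$ with $c_k=\operatorname{tr}[\rho_{\bm{a}}^{(k)}]$ and $\tilde{\rho}_{\bm{a}}^{(k)}$ a normalized density matrix (or zero, in which case both sides vanish), Theorem~\ref{thm:weaklim} applied to $\tilde{\rho}_{\bm{a}}^{(k)}\otimes\bm{0}_{\bm{b}}$ gives convergence of each expectation, and multiplying by $c_k$ restores the original scaling. Summing over $k$ completes the proof.

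I do not anticipate a substantive obstacle: the only points to check are that the LO-vacuum condition survives the polarization (which it does, because it is a linear condition) and that the partial-trace states on $\bm{a}$ fall within the hypothesis of Theorem~\ref{thm:weaklim} (which they do, after factoring out normalization). Beyond that, the corollary is essentially a bookkeeping consequence of weak convergence together with polarization and partial trace.
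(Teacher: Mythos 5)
Your proposal is correct and follows essentially the same route as the paper: polarization identity to reduce to diagonal expectations, followed by a partial trace over \(\bm{g}\) and an appeal to Thm.~\ref{thm:weaklim}. The only difference is that you spell out the normalization of the reduced states \(\rho_{\bm{a}}^{(k)}\) and the preservation of the LO-vacuum condition under the linear combinations, details the paper leaves implicit.
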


\begin{proof}
  For any complex, bilinear form $\bra{\phi}\hat{A}\ket{\psi}$ with
  bounded operator $\hat{A}$,
  \begin{align}
    \bra{\phi}\hat{A}\ket{\psi}
    &=
      \frac{1}{4}
      \sum_{j=0,1,2,3} (-i)^{j}(\bra{\phi}+(-i)^{j}\bra{\psi})\hat{A}(\ket{\phi}+i^{j}\ket{\psi})
      \nonumber\\
    &= \frac{1}{4}\sum_{j=0,1,2,3} (-i)^{j}\bra{\rho_{j}}\hat{A}\ket{\rho_{j}},
      \label{eq:polarization}
  \end{align}  
  where \(\ket{\rho_{j}}=\ket{\phi}+i^{j}\ket{\psi}\).   We can
    re-express
    \(\bra{\rho_{j}}\hat{A}\ket{\rho_{j}} =
    \tr(\ketbra{\rho_{j}}\hat{A})\).  After substituting
    \(\hat{A}=f(\hat{q}_{\delta})\) and tracing out systems other than
    the signal and LO modes in the state \(\ketbra{\rho_{j}}\), we can
    apply Thm.~\ref{thm:weaklim} to complete the proof of the
    corollary.  The identity in Eq.~\eqref{eq:polarization} is an instance of
    the polarization identity, a textbook method used to reconstruct
    an inner product from the associated norm, for instance see
    \cite{kadison:qf1997a,ReedSimon}.
\end{proof}

\section{Discussion}
\label{sec:discussion}

BBP homodyne can be seen to be a generalization of standard pulsed
homodyne by setting all the weights to \(\omega_k=1\). The well-known
properties of standard pulsed homodyne are preserved. In particular,
the moments of the BBP homodyne observables converge to the moments of
the target quadrature and the distributions of BBP measurement
outcomes converge weakly to that of the target quadrature.

The main motivation for introducing BBP homodyne is as a way of taking
advantage of calorimeters to measure quadratures of broadband modes
such as those present in optical femtosecond pulses or modes of
interest in quantum field theory such as Rindler modes. The BBP
homodyne configuration preserves the simplicity of the standard
homodyne experimental configuration with only two detectors after one
beamsplitter. The detectors need not resolve time and can be slow. An
alternative approach to homodyne measurements of broadband modes is to
use a device such as a Bragg grating to split the incoming modes
according to their wavelengths, then combine standard, narrowband
homodyne measurements at each wavelength. A version of this technique
is proposed in Ref.~\cite{shaked2018lifting}. This technique has the
advantage of being able to simultaneously measure multiple quadratures
across the resolved spectrum, at the cost of a more complicated
experimental configuration.

We have not yet considered the effect of energy-dependent inefficiency
on the performance of BBP homodyne. For standard pulsed homodyne,
detector inefficiency can be taken into account with an operational
theory of homodyne~\cite{Banaszek}. It is also of interest to directly
determine and exploit the effective POVMs at finite LO amplitudes as
done, for example in Ref.~\cite{Sanders}. Going further, when the
calorimeters perform well for low incident energy, at small LO
amplitude and signal energies, one can directly take advantage of each
calorimeter's measurement outcome, generalizing or bypassing the
subtraction, as done for weak-field homodyne in
Refs.~\cite{vogel1995homodyne,avagyan2023quantum}.

For applications of BBP homodyne, it is desirable to quantify the
convergence of the measurement outcome distribution to the ideal one
for the target quadrature. In many applications, the measurement
outcome is used in conditional operations such as displacements of
unmeasured modes.  Examples include CV quantum teleportation~\cite{braunstein1998teleportation,furusawa1998teleportation}
and CV quantum computing~\cite{gottesman2001encoding}. For these applications, it will be
helpful to quantify the fidelity of the conditional operations.

For the application of BBP homodyne to characterizing the state of
Rindler modes and verifying the thermal states of these modes, it is
necessary to determine how to realize the local oscillator and
beamsplitter while maintaining the compatibility with the relativistic
quantum field under investigation. Because it is impossible to measure
standard, single-frequency Rindler modes, it is also necessary to
determine how to reveal the desired quadrature information from
smeared such modes. Relevant suggestions have been offered for
different detection systems in~\cite{PRL2011}.

\section{Acknowledgment}
We thank Adriana Lita for help with identifying and describing the
performance of broadband TES detectors.  We also thank Zachary Levine
and Michael Mazurek for assistance with reviewing the paper before
submission.  E. S and J. R. v. M. acknowledge support from the
Professional Research Experience Program (PREP) operated jointly by
NIST and the University of Colorado. This work includes contributions
of the National Institute of Standards and Technology, which are not
subject to U.S. copyright.

\bibliography{Contents}

\end{document}